\newtheorem{theorem}{Theorem}[section]
\newtheorem{lemma}{Lemma}[section]
\newtheorem{corollary}{Corollary}[section]
\newtheorem{remark}{Remark}[section]
\newcommand{\be}{\begin{equation}}
\newcommand{\ee}{\end{equation}}
\newcommand{\bea}{\begin{eqnarray}}
\newcommand{\eea}{\end{eqnarray}}
\newcommand{\eeas}{\end{eqnarray*}}
\newcommand{\beas}{\begin{eqnarray*}}
\def\jnl@style{\it}
\def\aaref@jnl#1{{\jnl@style#1}}
\def\aaref@jnl#1{{\jnl@style#1}}
\def\aj{\aaref@jnl{AJ}}                   
\def\apj{\aaref@jnl{ApJ}}                 
\def\apjl{\aaref@jnl{ApJ}}                
\def\apjs{\aaref@jnl{ApJS}}               
\def\apss{\aaref@jnl{Ap\&SS}}             
\def\aap{\aaref@jnl{A\&A}}                
\def\aapr{\aaref@jnl{A\&A~Rev.}}          
\def\aaps{\aaref@jnl{A\&AS}}              
\def\mnras{\aaref@jnl{Mon.~Not.~Roy.~Astron.~Soc.}}             
\def\prd{\aaref@jnl{Phys.~Rev.~D}}        
\def\prc{\aaref@jnl{Phys.~Rev.~C}}  
\def\prl{\aaref@jnl{Phys.~Rev.~Lett.}}    
\def\qjras{\aaref@jnl{QJRAS}}             
\def\skytel{\aaref@jnl{S\&T}}             
\def\ssr{\aaref@jnl{Space~Sci.~Rev.}}     
\def\zap{\aaref@jnl{ZAp}}                 
\def\nat{\aaref@jnl{Nature}}              
\def\aplett{\aaref@jnl{Astrophys.~Lett.}} 
\def\apspr{\aaref@jnl{Astrophys.~Space~Phys.~Res.}} 
\def\physrep{\aaref@jnl{Phys.~Rep.}}      
\def\physscr{\aaref@jnl{Phys.~Scr}}       
\def\commat{\aaref@jnl{Comm.~Math.~Phys.}}              
\def\science{\aaref@jnl{Science}}               
\def\cqg{\aaref@jnl{Classical Quant.~Grav.}}            
\def\jpcs{\aaref@jnl{JPCS}}                                     
\def\ijmpd{\aaref@jnl{Int.~J.~Mod.~Phys.~D}}                    
\def\grg{\aaref@jnl{Gen.~Relat.~Gravit.}}               
\def\rpp{\aaref@jnl{Rep.~Prog.~Phys.}}          
\def\npa{\aaref@jnl{Nucl.~Phys.~A}}        
\def\lrr{\aaref@jnl{Living Rev.~Rel.}}                   
\def\jcap{\aaref@jnl{J.~Cosmology Astropart.~Phys.}}    
\def\rmp{\aaref@jnl{Rev.~Mod.~Phys.}}   
\def\epjc{\aaref@jnl{Eur.~Phys.~J.~C}} 
\def\plb{\aaref@jnl{~Phy.~Lett.~B}} 
\def\mpla{\aaref@jnl{Mod.~Phy.~Lett.~A}} 
\def\arxiv{\aaref@jnl{arxiv.org}}
\begin{document}
\color{black}       
\title{HOW A PROJECTIVELY FLAT GEOMETRY REGULATES $F(R)$-GRAVITY THEORY?}

\author{Tee-How Loo\orcidlink{0000-0003-4099-9843}}
\email{looth@um.edu.my}
\affiliation{Institute of Mathematical Sciences,
University of Malaya,
50603 Kuala Lumpur,
Malaysia}

\author{Avik De\orcidlink{0000-0001-6475-3085}}
\email{de.math@gmail.com}
\affiliation{Department of Mathematical and Actuarial Sciences, Universiti Tunku Abdul Rahman,\\
Jalan Sungai Long,
43000 Cheras,
Malaysia}

\author{Sanjay Mandal\orcidlink{0000-0003-2570-2335}}
\email{sanjaymandal960@gmail.com}
\affiliation{Department of Mathematics, Birla Institute of Technology and
Science-Pilani,\\ Hyderabad Campus, Hyderabad-500078, India.}

\author{P.K. Sahoo\orcidlink{0000-0003-2130-8832}}
\email{pksahoo@hyderabad.bits-pilani.ac.in}
\affiliation{Department of Mathematics, Birla Institute of Technology and
Science-Pilani,\\ Hyderabad Campus, Hyderabad-500078, India.}
%
\date{\today}
\begin{abstract}
In the present paper we examine a projectively flat spacetime solution of $F(R)$-gravity theory. It is seen that once we deploy projective flatness in the geometry of the spacetime, the matter field has constant energy density and isotropic pressure. We then make the condition weaker and discuss the effects of projectively harmonic spacetime geometry in $F(R)$-gravity theory and show that the spacetime in this case reduces to a generalised Robertson-Walker spacetime with a shear, vorticity, acceleration free perfect fluid with a specific form of expansion scalar presented in terms of the scale factor. Role of conharmonic curvature tensor in the spacetime geometry is also briefly discussed. Some analysis of the obtained results are conducted in terms of couple of $F(R)$-gravity models.
\end{abstract}

\maketitle

\date{\today}
\section{\textbf{Introduction}}
In 1918, H. Weyl constructed, what today is known as the Weyl geometry \cite{weyl1} and in 1921, in a subsequent paper \cite{weyl}, he defined a projective curvature tensor $P^l{}_{ijk}$ as a projectively invariant component of the Riemann curvature tensor. Suppose in an $n$-dimensional (pseudo-)Riemannian manifold $M^n$, we consider a torsion-less affine connection $\Gamma^l_{\:\:jk}$, with the corresponding curvature tensor 
$R^l{}_{ijk}$ and Ricci tensor $R_{ik}$ respectively. The projective curvature tensor is given by
\begin{align}\label{eqn:01}
P^l{}_{ijk}=R^l{}_{ijk}+L_{ji}\delta^l_k-L_{ki}\delta^l_j+(L_{jk}+L_{kj})\delta^l_j,
\end{align}  
where 
\[
L_{ik}=\frac{nR_{ki}+R_{ik}}{n^2-1}.
\]
A curve $x^l(t)$ in $M$ is said to be path induced by the affine connection $\Gamma^l{}_{jk}$ if it satisfies the following system of differential equations
\[
\ddot{x}^l+\dot x^j\dot x^k\Gamma^l{}_{jk}=\alpha\dot x^l,
\] 
where $\alpha(t)$ is a scalar tensor and $\dot x^l=dx^l/dt$.
Suppose there is another torsion-less affine connection $\tilde \Gamma^l{}_{jk}$ on $M$ such that both $\Gamma^l{}_{jk}$ and $\tilde \Gamma^l{}_{jk}$ 
give the same system of paths. Then  we must have the relation
\[
\tilde{\Gamma}^l_{\:\:jk}-\Gamma^l_{\:\:jk}=p_k\delta^l_j+p_j\delta^l_k,
\]
for some 1-form $p_i$. 
We say that $\Gamma^l{}_{jk}$ and $\tilde\Gamma^l{}_{jk}$ are projectively related. 
Furthermore, if $\tilde P^l_{ijk}$ denote the projective curvature tensor corresponding to $\tilde \Gamma^l{}_{jk}$, then the following relation hold.
\[
\tilde P^l{}_{ijk}=P^l{}_{ijk}.
\]
This means that the project curvature tensor is invariant under a projective change of $\Gamma^l{}_{jk}$.
In particular, if $\Gamma^l{}_{jk}$ is the Levi-Civita connection, then (\ref{eqn:01}) becomes
\be 
P_{hijk}=R_{hijk}+\frac{R_{ij}g_{hk}-R_{ik}g_{jh}}{n-1}.
\label{proj}\ee
It is known that the projective curvature tensor coincides with the Weyl conformal curvature tensor if and only if the underline metric is Einstein \cite{projconf}. Projective curvature symmetry which
preserves the Weyl projective curvature tensor carries significant interest in the standard theory of gravity \cite{projsym} governed by Einstein's field equations (EFE):
\begin{equation}\label{efe}
R_{ij}-\frac{R}{2}g_{ij}=\kappa^2T_{ij}.
\end{equation}

With the 1998 discovery of the accelerating expansion of the universe using type Ia supernovae, and due to the limited knowledge about the character of dark energy, an alternative theory of gravity was resurfaced by modifying the Einstein-Hilbert action by replacing the Ricci scalar $R$ with an arbitrary function $F(R)$. We should mention here that the Weyl geometry mentioned in the beginning of the present article was primarily proposed as an alternative of Einstein's gravity theory and a unified theory of gravitation
and electromagnetism \cite{weyl-survey}. Some of the initial $F(R)$-models, for example, $F(R)=R+\alpha R^2$, $(\alpha>0)$ proposed by Starobinsky \cite{star} and $F(R)=R-\mu^4/R$, $(\mu>0)$ proposed by Carroll et al. \cite{carroll} were aimed at explaining respectively the cosmic inflation without using any scalar field and late-time acceleration as a pure gravitational effect without using any dark energy and these models with all their limitations, still managed to popularise the $F(R)$-models, in general. But it also showed how non-trivial the process of finding a viable, stable $F(R)$-model is. For a detailed survey on $F(R)$-gravity theory see \cite{f(R)-survey} and the references therein. 

The action term 
\[S=\frac{1}{2\kappa^2}\int F(R) \sqrt{-g}d^4x +\int L_m\sqrt{-g}d^4x,\]
gives us the well-known field equations of the $F(R)$-gravity as
\be 
F_R(R)R_{ij}-\frac{1}{2}F(R)g_{ij}+(g_{ij}\Box-\nabla_i\nabla_j)F_R(R)=\kappa^2T_{ij},\label{frgr1}	
\ee
where $\Box=\nabla^k\nabla_k$, $L_m$ is the matter Lagrangian, and 
\[T_{ij}=-\frac{2}{\sqrt{-g}}\frac{\delta(\sqrt{-g}L_m)}{\delta g^{ij}},\] is the stress-energy tensor. We consider a perfect fluid type stress-energy tensor given by 
\be T_{ij}=ph_{ij}+\rho u_iu_j,
\label{pf}\ee
where $p$ and $\rho$ are the isotropic pressure and energy density, respectively related by an equation of state (EoS) $\omega$ as $p=\omega \rho$. $h_{ij}=g_{ij}+u_iu_i$ denote the orthogonl projector.

The field equations (\ref{frgr1}) can be rewritten as an effective theory 
\be R_{ij}-\frac{R}{2}g_{ij}=\frac{\kappa^2}{F_R(R)}T_{ij}^{\text{eff}} \notag
\ee
where 
\beas\label{T^eff}
\kappa^2T_{ij}^{\text{eff}}=\kappa^2T_{ij}+\frac{F(R)-RF_R(R)}{2}g_{ij}+\left(\nabla_i\nabla_j-g_{ij}\Box\right)F_R(R).
\eeas
We often use the term 
$$
\kappa^2 T_{ij}^{\text{curv}}=\frac{F(R)-RF_R(R)}{2}g_{ij}+\left(\nabla_i\nabla_j-g_{ij}\Box   \right)F_R(R).
$$
Different type of symmetries of the spacetimes, although initiated probably for plain mathematical purpose, are useful in the gravity theories to find exact solutions of the field equations. Finding a solution of Einstein's field equations or of any modified gravity theories like the present $F(R)$-theory means finding a metric $g_{ij}$ given in terms of elementary functions (the geometry) where the physics part, the $T_{ij}$ is guided by physical laws such as energy conditions etc. This is already very complicated considering the higher order non-linear differentiation of the metric tensor involved. And simply a local solution of the set of differential equations is not enough, either, the global solution or topological properties are to be taken into account. So traditionally we always impose some kind of symmetry into the metric, the Riemannian tensor, the Ricci tensor and alike. Then once the solution is found, we study the physical interpretation. Spherical symmetric, axisymmetric static, non-static spherically symmetric, plane symmetric etc are some of the popular outcome of this. In the present paper we impose two constraints on the spacetime, namely, the projectively flatness and null divergence of the projective curvature tensor to investigate the role of the projective curvature tensor in $F(R)$-gravity theories. 

After the introduction, we discuss the field equation of $F(R)$-gravity in a projectively flat spacetime. In the next section we analyse our findings using two popular models of $F(R)$-gravity. In the next section, we investigate the effect of divergence free projective curvature tensor followed by a discussion.
 
\section{\textbf{Projectively flat spacetime}}
We begin this section with a lemma which is useful in dealing with the modified $F(R)$-gravity as an effective counterpart of the standard theory of gravity, in general. 
\begin{lemma}\label{lem1}
In a perfect fluid spacetime solution of the $F(R)$-gravity theories with constant Ricci scalar $R$, the effective pressure $p^{\text{eff}}$ and energy density $\rho^{\text{eff}}$ satisfy the following relation:
\begin{align*}
 p^{\text{eff}}=p+\frac{F(R)-RF_R(R)}{2\kappa^2},\\\rho^{\text{eff}}=\rho-\frac{F(R)-RF_R(R)}{2\kappa^2}.
\end{align*}
\end{lemma}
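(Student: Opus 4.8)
The plan is to read off the result directly from the expression for the effective stress-energy tensor recorded just before the lemma, exploiting the hypothesis that $R$ is constant. First I would observe that since $F_R$ is a function of $R$ alone, constancy of $R$ makes $F_R(R)$ a constant scalar on the spacetime; hence every covariant derivative of $F_R(R)$ vanishes and the operator term $\left(\nabla_i\nabla_j-g_{ij}\Box\right)F_R(R)$ drops out completely. This reduces the effective tensor to
\[
\kappa^2T_{ij}^{\text{eff}}=\kappa^2T_{ij}+\frac{F(R)-RF_R(R)}{2}g_{ij}.
\]

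Next I would insert the perfect-fluid form $T_{ij}=ph_{ij}+\rho u_iu_j=pg_{ij}+(p+\rho)u_iu_j$ and write the effective tensor in the same shape, $T_{ij}^{\text{eff}}=p^{\text{eff}}g_{ij}+(p^{\text{eff}}+\rho^{\text{eff}})u_iu_j$. Substituting and collecting the two independent tensor structures gives
\[
\kappa^2T_{ij}^{\text{eff}}=\left[\kappa^2p+\frac{F(R)-RF_R(R)}{2}\right]g_{ij}+\kappa^2(p+\rho)u_iu_j.
\]

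The final step is to match coefficients of $g_{ij}$ and $u_iu_j$. The $g_{ij}$ comparison immediately yields $p^{\text{eff}}=p+\frac{F(R)-RF_R(R)}{2\kappa^2}$, and the $u_iu_j$ comparison gives $p^{\text{eff}}+\rho^{\text{eff}}=p+\rho$; subtracting the former from the latter produces $\rho^{\text{eff}}=\rho-\frac{F(R)-RF_R(R)}{2\kappa^2}$, as claimed. There is no serious obstacle in this argument; the only point requiring a word of justification is the legitimacy of equating the two coefficients separately. This is valid because $g_{ij}$ and $u_iu_j$ are pointwise linearly independent symmetric tensors once $u^iu_i=-1$: contracting a relation $ag_{ij}+bu_iu_j=0$ with $u^iu^j$ and separately taking its trace forces $a=b$ and $an=b$, hence $a=b=0$ in dimension $n>1$. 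Thus the perfect-fluid decomposition of $T_{ij}^{\text{eff}}$ is unique and the identification of $p^{\text{eff}}$ and $\rho^{\text{eff}}$ is unambiguous.
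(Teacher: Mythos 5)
Your proof is correct and takes essentially the same route as the paper: the hypothesis of constant $R$ annihilates the $\left(\nabla_i\nabla_j-g_{ij}\Box\right)F_R(R)$ term, and substituting the perfect-fluid form into the reduced effective stress-energy tensor and matching coefficients yields the stated expressions. The paper compresses this into ``simple calculations produce the result,'' so your explicit coefficient matching and the linear-independence argument for $g_{ij}$ and $u_iu_j$ merely spell out what the paper leaves implicit.
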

\begin{proof}
The field equations (\ref{frgr1}) of $F(R)$-theories of gravity in case of a constant $R$ reduces to the form
\be R_{ij}-\frac{R}{2}g_{ij}=\frac{\kappa^2}{F_R(R)}T_{ij}+\frac{F(R)-RF_R(R)}{2F_R(R)}g_{ij}.\label{fr}\ee
Using the perfect fluid form (\ref{pf}), simple calculations produce the result. 
\end{proof}

\begin{theorem}\label{thm1}
In any projectively flat perfect fluid spacetime solution of $F(R)$-gravity theory, the energy density $\rho$ and isotropic pressure $p$ are constants and separately satisfy certain relations with the gravity sector as follows: 
$$p=\frac{RF_R(R)}{4\kappa^2}-\frac{F(R)}{2\kappa^2}, \qquad \rho=\frac{F(R)}{2\kappa^2}-\frac{RF_R(R)}{4\kappa^2}.$$ 
\end{theorem}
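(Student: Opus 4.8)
The plan is to first convert the purely geometric hypothesis of projective flatness into an algebraic statement about the Ricci tensor, and only afterward invoke the $F(R)$ field equations. Setting $P_{hijk}=0$ in (\ref{proj}) gives
\[
R_{hijk}=\frac{R_{ik}g_{hj}-R_{ij}g_{hk}}{n-1},
\]
and the essential observation is that the right-hand side must inherit every algebraic symmetry of the curvature tensor. I would exploit skew-symmetry in the first index pair, $R_{hijk}=-R_{ihjk}$: writing the displayed identity once as it stands and once with $h$ and $i$ interchanged, then adding, yields
\[
R_{ik}g_{hj}-R_{ij}g_{hk}+R_{hk}g_{ij}-R_{hj}g_{ik}=0.
\]
Contracting with $g^{hj}$ and using $g^{hj}g_{hj}=n$ collapses this to $nR_{ik}-Rg_{ik}=0$, so the spacetime is necessarily Einstein, $R_{ij}=\frac{R}{n}g_{ij}$.

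The next step promotes this to an Einstein space of \emph{constant} scalar curvature. Substituting $R_{ij}=\frac{R}{n}g_{ij}$ into the contracted second Bianchi identity $\nabla^iR_{ij}=\frac12\nabla_jR$ produces $\left(\frac1n-\frac12\right)\nabla_jR=0$; since $n=4\neq2$ this is exactly Schur's lemma and forces $R$ to be constant. Constancy of $R$ is the decisive simplification, because $F(R)$ and $F_R(R)$ then become constants, so $\nabla_i\nabla_jF_R(R)=0$ and $\Box F_R(R)=0$, and every covariant-derivative term in (\ref{frgr1}) drops out.

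What remains is essentially Lemma \ref{lem1} combined with a rank argument. The field equations now read $F_R(R)R_{ij}-\frac12F(R)g_{ij}=\kappa^2T_{ij}$, and inserting the Einstein condition together with the perfect-fluid form (\ref{pf}), $T_{ij}=p\,g_{ij}+(p+\rho)u_iu_j$, gives
\[
\left(\frac{RF_R(R)}{n}-\frac{F(R)}{2}\right)g_{ij}=\kappa^2 p\,g_{ij}+\kappa^2(p+\rho)u_iu_j.
\]
The left-hand side is a pure multiple of $g_{ij}$, whereas $u_iu_j$ is a rank-one tensor independent of $g_{ij}$; matching the two sides therefore forces $p+\rho=0$ and $\kappa^2 p=\frac{RF_R(R)}{n}-\frac{F(R)}{2}$. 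Setting $n=4$ reads off the stated formula for $p$, and $\rho=-p$ gives the formula for $\rho$; both are constants because $R$ is.

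I expect the genuine difficulty to lie entirely in the first step, namely propagating the Riemann symmetries through the projective-flatness identity to land cleanly on the Einstein condition: the contraction with $g^{hj}$ must be chosen so that the cross terms cancel rather than proliferate, and one should double-check the index bookkeeping in $P_{hijk}$ against the stated convention. Everything afterward — the Schur step, the annihilation of the derivative terms, and the final rank-one comparison that yields $p=-\rho$ — is routine once the spacetime is known to be Einstein with constant curvature.
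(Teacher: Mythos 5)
Your proof is correct and reaches the theorem through the same overall skeleton as the paper (projective flatness $\Rightarrow$ Einstein condition $\Rightarrow$ constant $R$ $\Rightarrow$ algebra with the reduced field equations), but two of your steps take genuinely different routes. First, for the constancy of $R$: the paper covariantly differentiates the flatness identity (\ref{riem}) and invokes the contracted second Bianchi identity $\nabla^mR_{jklm}=\nabla_jR_{kl}-\nabla_kR_{jl}$ to show that the Ricci tensor is of Codazzi type, $\nabla_iR_{jk}=\nabla_jR_{ik}$, whose trace then yields $\nabla_iR=0$; you instead feed the already-established Einstein condition $R_{ij}=\tfrac{R}{4}g_{ij}$ into $\nabla^iR_{ij}=\tfrac12\nabla_jR$, i.e.\ Schur's lemma. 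Your route is shorter and more elementary, though the paper's Codazzi detour is not wasted on its side: that property is precisely the ingredient reused in Section 4 for the projectively harmonic (divergence-free) case, where the Einstein condition is no longer available. Second, for the final algebra: the paper contracts its reduced field equation (\ref{a2}) with $g^{ij}$ and transvects with $u^j$ to obtain two linear equations, (\ref{a3}) and (\ref{a4}), and solves them for $p$ and $\rho$; you instead argue that a multiple of $g_{ij}$ can equal a multiple of $u_iu_j$ only if both coefficients vanish. These are the same computation in different clothing (your rank claim is verified exactly by those two contractions, since the trace and the $u^iu^j$-projection give $4\lambda=-\mu$ and $-\lambda=\mu$), but your formulation has the virtue of making $p+\rho=0$, i.e.\ the equation of state $\omega=-1$, explicit at the level of the theorem, a fact the paper only surfaces later in its graphical analysis.
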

\begin{proof}
In a 4-dimensioanl projectively flat spacetime, the projective curvature tensor vanishes, therefore from (\ref{proj}) we can express the Riemannian curvature tensor as
\be 3R_{hijk}+R_{ij}g_{hk}-R_{ik}g_{jh}=0.\label{riem}\ee
Using the identities of Riemannian curvature tensor, $R_{hijk}+R_{ihjk}=0$, from (\ref{riem}) we can readily obtain
\beas R_{ij}g_{hk}-R_{ik}g_{jh}=R_{hk}g_{ij}-R_{hj}g_{ik}\eeas 
which on contraction of $h$ and $k$ gives
\be R_{ij}=\frac{R}{4}g_{ij}.\label{rij}\ee
Again, covariantly differentiating (\ref{riem}) we get
\be 3\nabla^hR_{kijh}=\nabla_jR_{ik}-\nabla_kR_{ij}\label{v1}\ee
Using the contracted Bianchi's second identity $\nabla^mR_{jklm}=\nabla_jR_{kl}-\nabla_kR_{jl}$ in (\ref{v1}) we conclude that the Ricci curvature is of Codazzi type
\be \nabla_i R_{jk}-\nabla_jR_{ik}=0.\label{v2}\ee
Contracting $j$ and $k$ in (\ref{v2}) and using the identity $\nabla^kR_{ik}=\frac{1}{2}\nabla_iR$ we obtain
\be \nabla_iR=0.\label{r}\ee
(\ref{rij}) and (\ref{r}) together imply 
\be \nabla_kR_{ij}=0.\label{a1}\ee

Due to (\ref{rij}), for constant $R$ the field equations (\ref{fr}) reduces to
\be -\frac{R}{4}g_{ij}=\left[\frac{\kappa^2}{F_R(R)}p+\frac{F(R)-RF_R(R)}{2F_R(R)}  \right]g_{ij}+\frac{\kappa^2}{F_R(R)}(p+\rho)u_iu_j.\label{a2}\ee
Contracting $i$ and $j$ in (\ref{a2}) we get
\be R=\frac{3\kappa^2 p}{F_R(R)}-\frac{\kappa^2 \rho}{F_R(R)}+\frac{2F(R)}{F_R(R)}.\label{a3}\ee
Again transvecting (\ref{a2}) by $u^j$ and after some calculations we get
\be R=\frac{2F(R)}{F_R(R)}-\frac{4\kappa^2 \rho}{F_R(R)}.\label{a4}\ee
(\ref{a3}) and (\ref{a4}) together imply
\beas p=\frac{RF_R(R)}{4\kappa^2}-\frac{F(R)}{2\kappa^2},\eeas
and
\beas \rho=\frac{F(R)}{2\kappa^2}-\frac{RF_R(R)}{4\kappa^2}.\eeas
\end{proof}
\begin{remark}
In a projectively flat spacetime, Ricci scalar $R$ is constant.
\end{remark}
Using Lemma \ref{lem1} and Theorem \ref{thm1} it is straightforward to obtain the following result:
\begin{corollary}
In any projectively flat  perfect fluid spacetime solution of $F(R)$-gravity theory, the effective energy density $\rho^{\text{eff}}$ and isotropic pressure $p^{\text{eff}}$ are constants and given by the expressions: 
\beas p^{\text{eff}}=-\frac{RF_R(R)}{4\kappa^2}, \qquad \rho^{\text{eff}}=\frac{RF_R(R)}{4\kappa^2}.\eeas
\end{corollary}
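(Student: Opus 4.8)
The plan is to simply combine the two results already in hand, since the statement is a direct consequence of Lemma~\ref{lem1} and Theorem~\ref{thm1}. First I would verify that the hypotheses of Lemma~\ref{lem1} are satisfied in the present setting: by equation~(\ref{r}) in the proof of Theorem~\ref{thm1} (equivalently, by the preceding Remark), projective flatness forces the Ricci scalar $R$ to be constant. Hence the constant-$R$ reduction on which Lemma~\ref{lem1} rests is legitimately available here, and the lemma may be invoked.

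Next I would take the two expressions for the effective quantities furnished by Lemma~\ref{lem1},
\begin{align*}
p^{\text{eff}}=p+\frac{F(R)-RF_R(R)}{2\kappa^2},\qquad
\rho^{\text{eff}}=\rho-\frac{F(R)-RF_R(R)}{2\kappa^2},
\end{align*}
and substitute into them the closed forms for the matter pressure and energy density established in Theorem~\ref{thm1}, namely $p=\frac{RF_R(R)}{4\kappa^2}-\frac{F(R)}{2\kappa^2}$ and $\rho=\frac{F(R)}{2\kappa^2}-\frac{RF_R(R)}{4\kappa^2}$.

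The decisive step is then a short algebraic cancellation. The $\frac{F(R)}{2\kappa^2}$ contributions coming from Theorem~\ref{thm1} are exactly annihilated by the $\frac{F(R)}{2\kappa^2}$ pieces of the curvature correction in Lemma~\ref{lem1}, leaving only the $RF_R(R)$ terms; these combine as $\tfrac14-\tfrac12=-\tfrac14$ for $p^{\text{eff}}$ and, symmetrically, as $+\tfrac14$ for $\rho^{\text{eff}}$. This produces $p^{\text{eff}}=-\frac{RF_R(R)}{4\kappa^2}$ and $\rho^{\text{eff}}=\frac{RF_R(R)}{4\kappa^2}$. Finally, because $R$ is constant and $F$ is a fixed function, each right-hand side is a constant, so the effective pressure and energy density are themselves constant, completing the argument.

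I do not expect any genuine obstacle, as the corollary is immediate once Lemma~\ref{lem1} and Theorem~\ref{thm1} are in place. If anything warrants care, it is the sign bookkeeping in the cancellation together with the preliminary check that Lemma~\ref{lem1} may actually be applied, i.e. that projective flatness supplies the constancy of $R$ that the lemma presupposes.
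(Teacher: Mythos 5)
Your proposal is correct and follows exactly the route the paper takes: the paper proves the corollary simply by noting that it follows from Lemma~\ref{lem1} (applicable since projective flatness makes $R$ constant) combined with the expressions for $p$ and $\rho$ in Theorem~\ref{thm1}, which is precisely your substitution-and-cancellation argument. Your sign bookkeeping is also right, so nothing further is needed.
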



\section{\textbf{Graphical analysis of the scenario}}

Energy conditions are one of the greatest tools to examine the cosmological models' self-stability, which are generally derived from the well-known Raychaudhuri equation \cite{Raychaudhuri/1995,R2,R3}. It also helps us describe the space-time curve's geometrical behavior such as spacelike, timelike and lightlike, and gives new insights into dreadful singularities \cite{Moraes/2017,M2}. The Raychaudhuri equations can be written if the following forms
\begin{equation}
\label{16}
\frac{d\theta}{d\tau}=-\frac{1}{3}\theta^2-\sigma_{\mu\nu}\sigma^{\mu\nu}+\omega_{\mu\nu}\omega^{\mu\nu}-R_{\mu\nu}u^{\mu}u^{\nu}\,,
\end{equation}
\begin{equation}
\label{17}
\frac{d\theta}{d\tau}=-\frac{1}{2}\theta^2-\sigma_{\mu\nu}\sigma^{\mu\nu}+\omega_{\mu\nu}\omega^{\mu\nu}-R_{\mu\nu}n^{\mu}n^{\nu}\,,
\end{equation}
where $\theta$ is the expansion factor, $n^{\mu}$ is the null vector, and $\sigma^{\mu\nu}$ and $\omega_{\mu\nu}$ are, respectively, the shear and the rotation associated with the vector field $u^{\mu}$. For attractive gravity, equations \eqref{16}, and \eqref{17} satisfy the following conditions
\begin{align*}
\label{18}
R_{\mu\nu}u^{\mu}u^{\nu}\geq0\,,\\
 R_{\mu\nu}n^{\mu}n^{\nu}\geq0\,.
\end{align*}
 Therefore, if we are working with a perfect fluid matter distribution and collaborating with the work from Capozziello et al., \cite{capozziello}, the energy conditions recovered from standard GR are
\begin{itemize}
\item Strong energy conditions (SEC) if  $\rho^{eff}+3p^{eff}\geq 0\,$;

\item Weak energy conditions (WEC) if  $\rho^{eff}\geq 0, \rho^{eff}+p^{eff}\geq 0\,$;

\item Null energy condition (NEC) if  $\rho^{eff}+p^{eff}\geq 0\,$;

\item Dominant energy conditions (DEC) if $\rho^{eff}\geq 0, |p^{eff}|\leq \rho\,$.
\end{itemize}

Using Lemma 2.1, we have rewritten the energy conditions as follows

\begin{itemize}
\item Strong energy conditions (SEC) if  $\rho+3p\geq 0\,$ with $R F_R(R)-F(R)\geq 0$;

\item Weak energy conditions (WEC) if  $\rho\geq 0, \rho+p\geq 0\,$ with $-R F_R(R)+F(R)\geq 0$;

\item Null energy condition (NEC) if  $\rho+p\geq 0\,$;

\item Dominant energy conditions (DEC) if $\rho\geq 0, |p|\leq \rho\,$.
\end{itemize}

 Now, in the following subsections, we discuss the energy conditions for two specific functions of Ricci scalar $R$ under this formulation. Furthermore, we would like to note here that under projectively flat spacetime, $R$ is constant. Therefore, the continuity equation follows directly i.e., $\nabla^iT_{ij}=0$.
 
\subsection{Model-1: $F(R)=R-\alpha  \left(1-e^{-\frac{R}{\alpha }}\right)$}

Here we presume $F(R)$ as a exponential function with free parameter $\alpha$. Therefore, the energy density and pressure can be written as
\begin{equation}\label{23}
\rho=\frac{1}{4} e^{-\frac{R}{\alpha }} \left(2 \alpha +R e^{R/\alpha }-2 \alpha  e^{R/\alpha }+R\right),
\end{equation}

\begin{equation}\label{24}
p=-\frac{1}{4} e^{-\frac{R}{\alpha }} \left(2 \alpha +R e^{R/\alpha }-2 \alpha  e^{R/\alpha }+R\right).
\end{equation}
Now, using Eqn. \eqref{23} and \eqref{24}, one can discuss the energy conditions with this setup. The profiles of all energy conditions are depicted in Fig 1, 2, 3 except WEC. In this setup, $\rho+p$ reduces to zero. The energy density is positive throughout the evolution of Ricci Scalar $R>1$ and $\alpha>0$. From Figure 1, one can observe that the energy density is high for higher values of $R$, which suggests the energy density of our universe was high in the early time, and later it decreases as Ricci scalar $R$ decreases with time. Also, NEC is part of WEC. Therefore, WEC and NEC are satisfied. The profile of DEC is presented in Fig. 2, and it takes its value in the positive range. SEC is violated, and this result indicates the late-time acceleration of the universe \cite{sm1, sa2,sm2,sm3,sa1,sm4}. In addition, the equation of state parameter, $\omega=p/\rho=-1$ for this formulation. Moreover, All of the results are compatible with the $\Lambda$CDM model \cite{planck/2018}.

\begin{figure}[H]
\begin{center}
\includegraphics[scale=0.4]{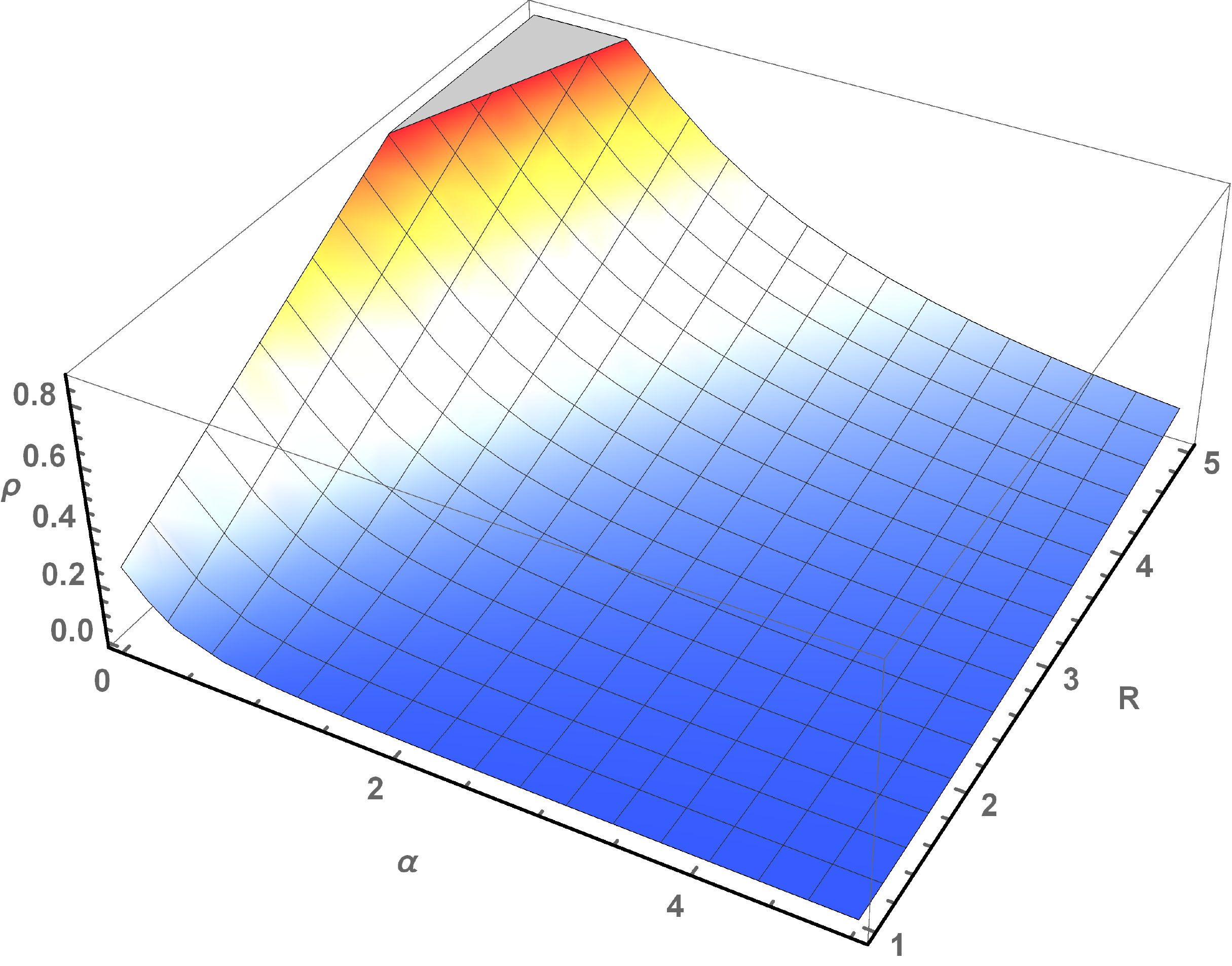}
\caption{Evolution of energy density with respect to $R$ and $\alpha$ for model-1.}
\end{center}
\end{figure}

\begin{figure}[H]
\begin{center}
\includegraphics[scale=0.4]{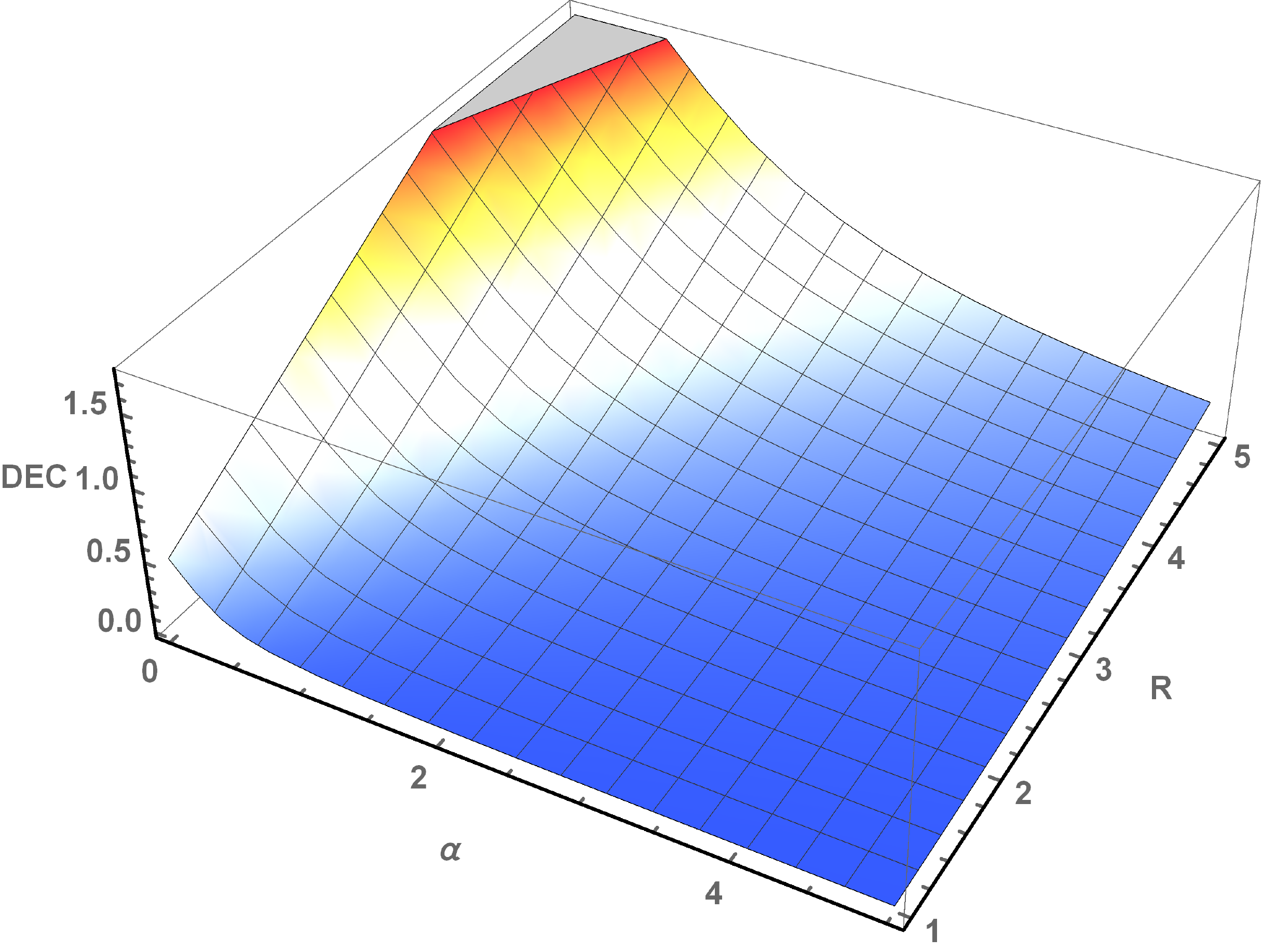}
\caption{Evolution of DEC with respect to $R$ and $\alpha$ for model-1.}
\end{center}
\end{figure}
\begin{figure}[H]
\begin{center}
\includegraphics[scale=0.4]{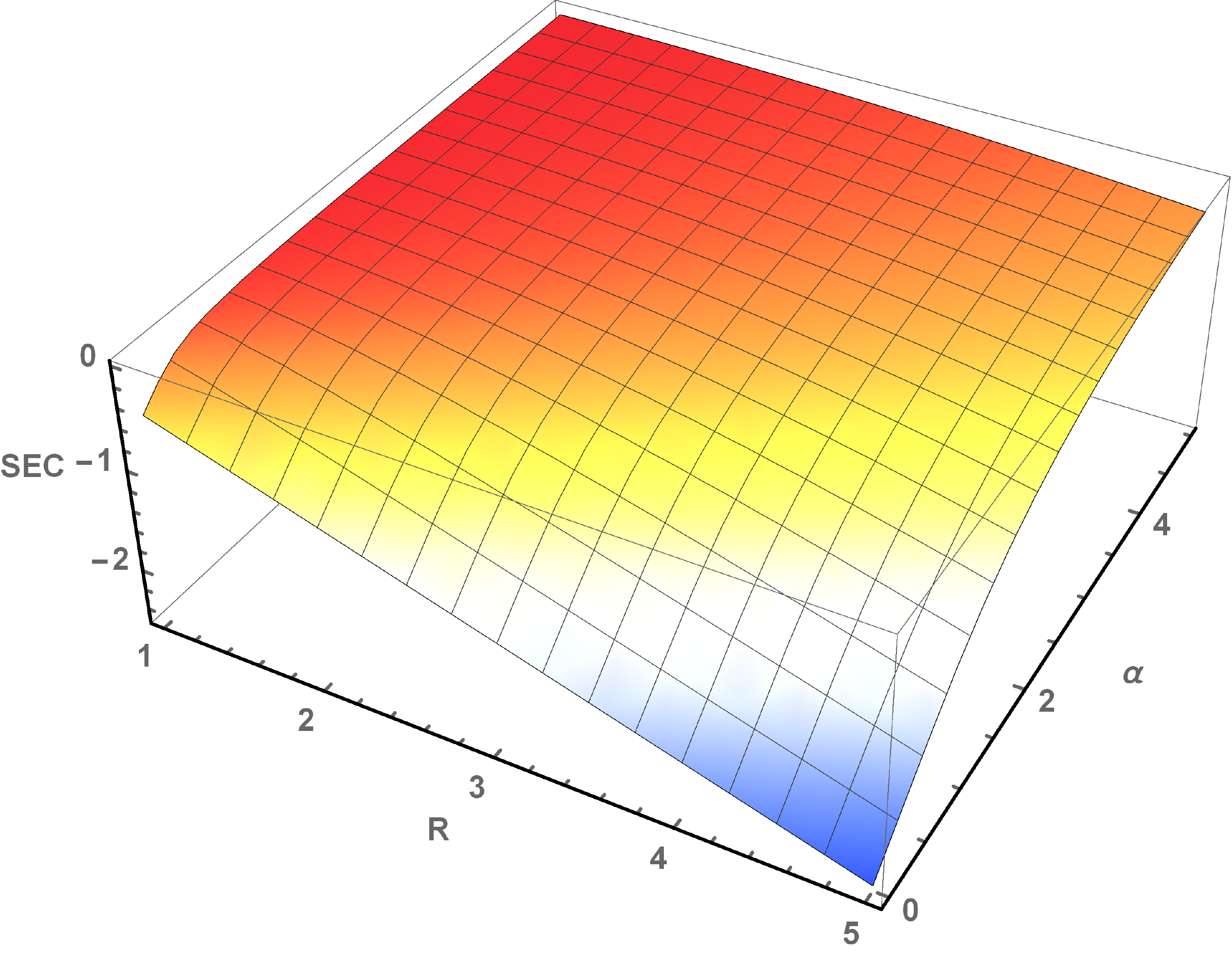}

\caption{Evolution of SEC with respect to $R$ and $\alpha$ for model-1.}
\end{center}
\end{figure}

\subsection{Model-2: $F(R)=R-\log (\beta  R)$}

In this subsection, we consider $F(R)$ as a logarithmic function of the Ricci scalar $R$ with free parameter $\beta>0$. For this $F(R)$, we can write the energy density and pressure as follows
\begin{equation}\label{25}
\rho=\frac{1}{4} (-2 \log (\beta  R)+R+1),
\end{equation}

\begin{equation}\label{26}
p=\frac{1}{4} (2 \log (\beta  R)-R-1).
\end{equation}

Using Eqn. \eqref{25} and \eqref{26}, we present the profiles of energy density $\rho$, DEC, and SEC in Fig. 4, 5, and 6, respectively. We observe that the energy density and DEC are satisfied from those figures, whereas SEC is violated. Moreover, WEC and NEC are also satisfied as $\rho+p$ reduces to zero for this formulation. Nonetheless, the equation of state parameter $\omega$ takes its value as $-1$. These results indicate the accelerated expansion of the universe and compatible with the $\Lambda$CDM model \cite{planck/2018}.

\begin{figure}[H]
\begin{center}
\includegraphics[scale=0.4]{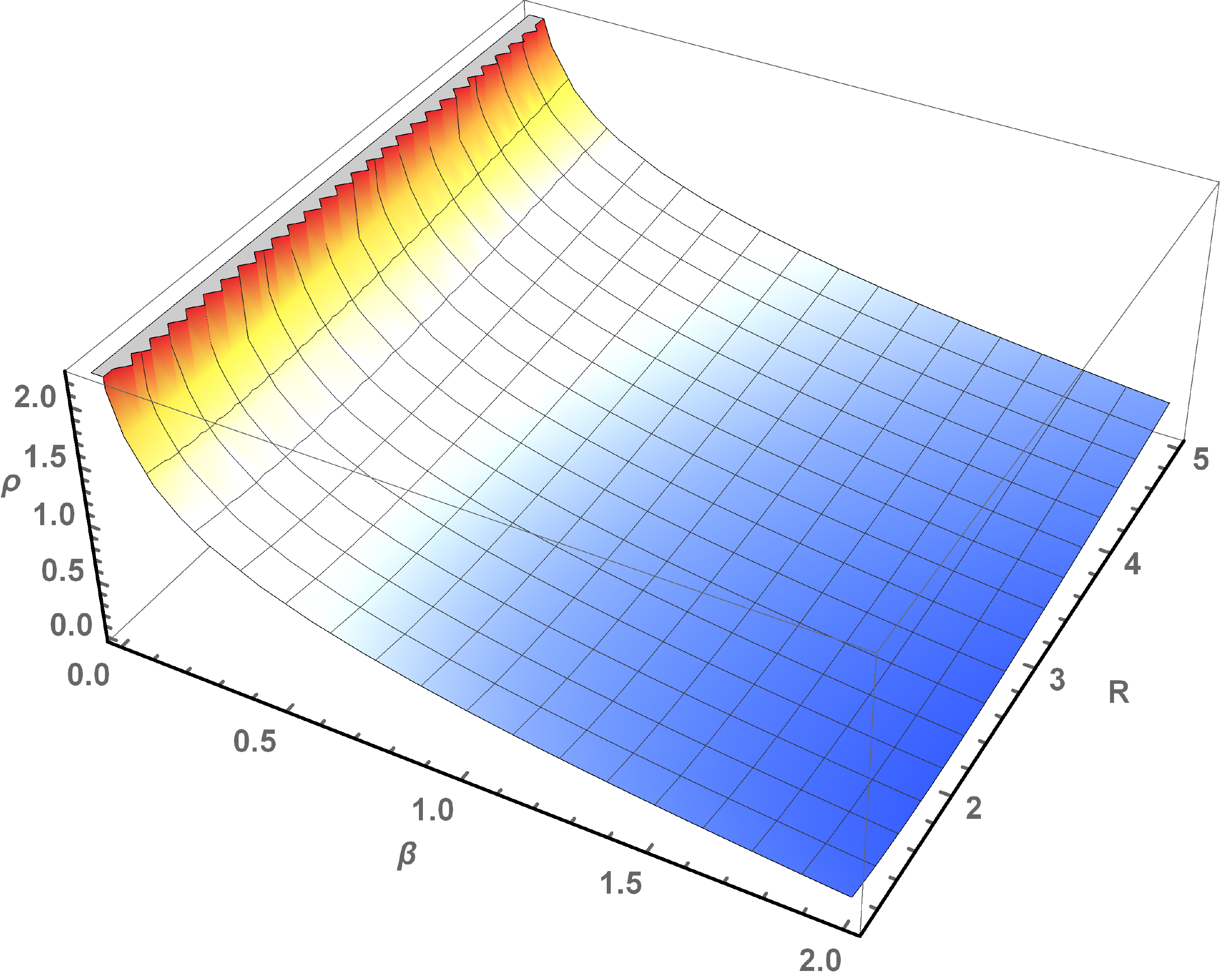}

\caption{Evolution of energy density with respect to $R$ and $\beta$ for model-2.}
\end{center}
\end{figure}
\begin{figure}[H]
\begin{center}
\includegraphics[scale=0.4]{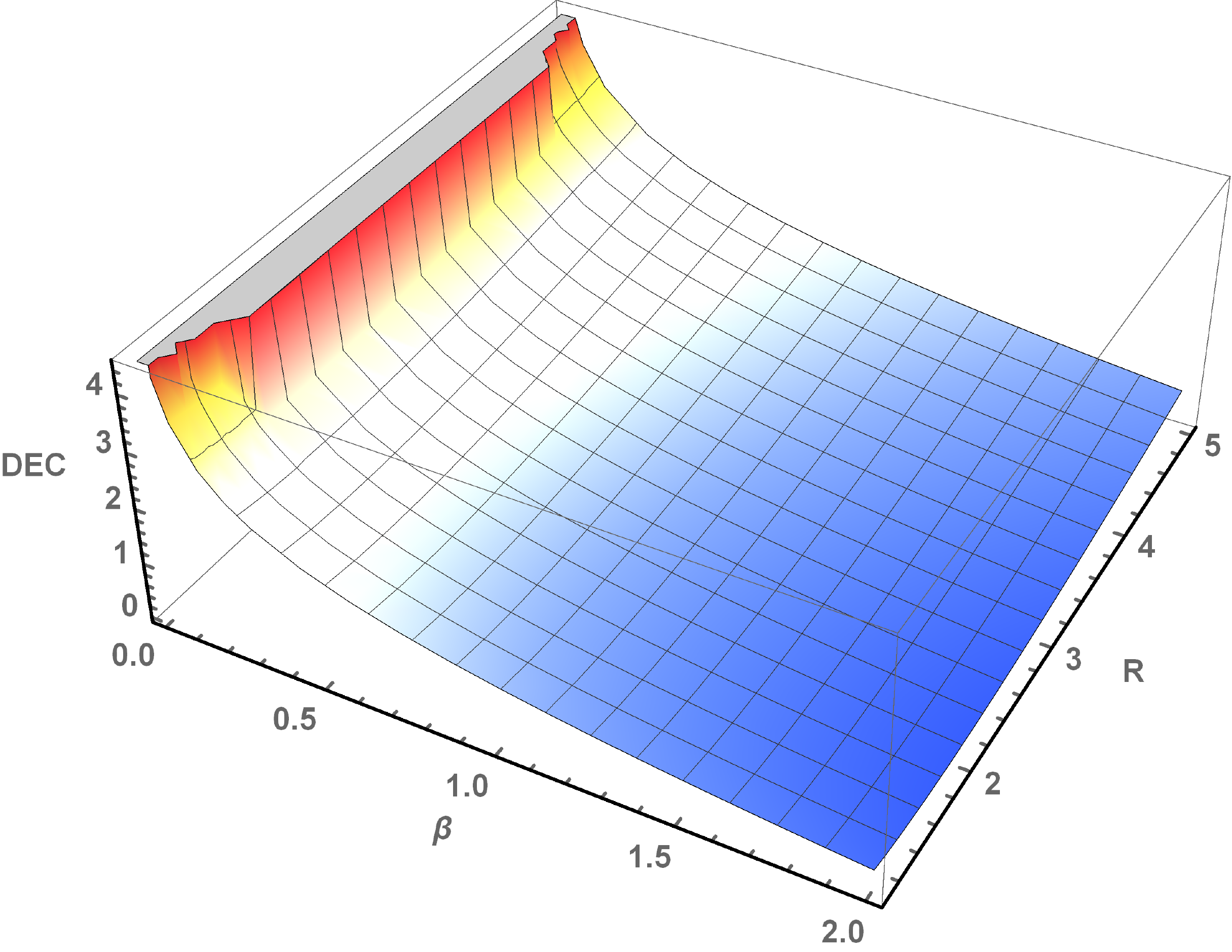}

\caption{Evolution of DEC with respect to $R$ and $\beta$ for model-2.}
\end{center}
\end{figure}
\begin{figure}[H]
\begin{center}
\includegraphics[scale=0.4]{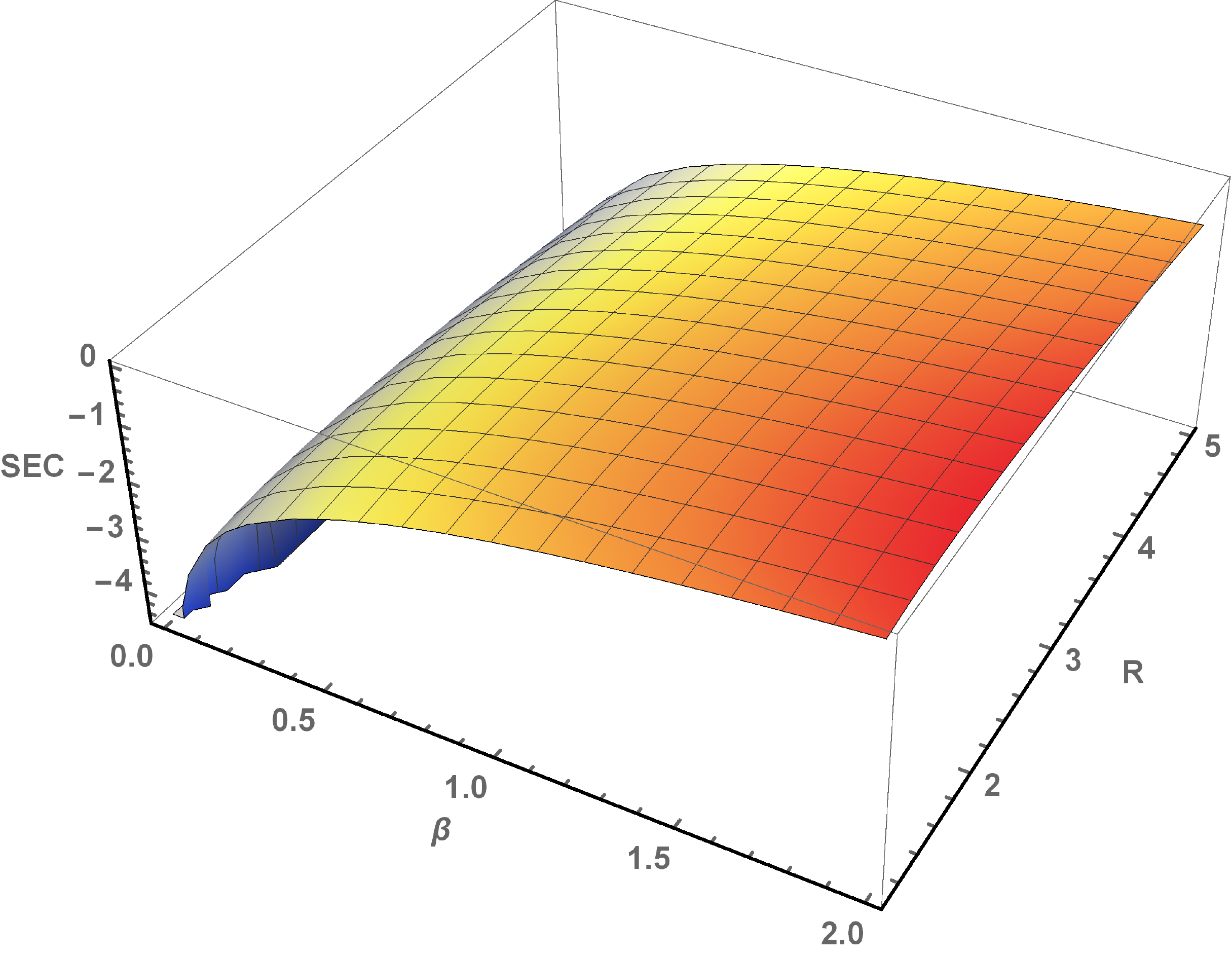}
\caption{Evolution of SEC with respect to $R$ and $\beta$ for model-2.}
\end{center}
\end{figure}

\section{\textbf{Projectively harmonic spacetime}}
In this section, instead of projective flatness, we impose only vanishing projective divergence, namely, $\nabla^hP_{hijk}=0$, into the spacetime geometry to see how this weaker restriction affects the theories of gravity. 

Putting $n=4$ in (\ref{proj}) and then taking divergence we obtain same equation as (\ref{v1}) for a divergence free projective curvature. Thereafter, proceeding similarly we get the results as in equations (\ref{v2}) and (\ref{r}).
Since the Ricci scalar $R$ is constant, we take covariant derivative of (\ref{fr}) and use (\ref{v2}) to produce
\be\nabla_iT_{jk}-\nabla_jT_{ik}=0.\label{codazzi}\ee
Hence we get the following result:
\begin{theorem}\label{thm2}
In a projectively harmonic spacetime solution of $F(R)$-gravity theories, the stress-energy tensor is of Codazzi type.
\end{theorem}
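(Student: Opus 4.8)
The plan is to retrace the opening moves of the proof of Theorem~\ref{thm1}, but to extract only the divergence-level information, since projective harmonicity $\nabla^hP_{hijk}=0$ is by definition a statement about the divergence of the projective tensor rather than its full vanishing. First I would set $n=4$ in (\ref{proj}) and apply $\nabla^h$ to the resulting expression. Because the metric is covariantly constant, the contraction $\nabla^h(R_{ij}g_{hk}-R_{ik}g_{jh})$ reduces to $\nabla_k R_{ij}-\nabla_j R_{ik}$, so the hypothesis $\nabla^hP_{hijk}=0$ collapses to a relation of the same shape as (\ref{v1}), tying the divergence of the Riemann tensor to the antisymmetrised first derivatives of the Ricci tensor.

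Second, I would substitute the contracted second Bianchi identity $\nabla^m R_{jklm}=\nabla_j R_{kl}-\nabla_k R_{jl}$ into that relation to eliminate the Riemann divergence, which forces the Codazzi condition (\ref{v2}) on the Ricci tensor. Contracting (\ref{v2}) and invoking $\nabla^k R_{ik}=\tfrac12\nabla_i R$ then yields $\nabla_i R=0$, i.e.\ $R$ is constant (\ref{r}). This is the pivotal consequence of the weaker hypothesis: without assuming full projective flatness—and hence without the Einstein-space relation (\ref{rij}) that was available in Theorem~\ref{thm1}—harmonicity alone still pins down a constant Ricci scalar, which is exactly what licenses the reduced field equation (\ref{fr}).

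Third, with $R$ constant the scalars $F(R)$ and $F_R(R)$ are constant and $g_{ij}$ is parallel, so every term in (\ref{fr}) other than $T_{ij}$ is covariantly constant. Applying $\nabla_i$ to (\ref{fr}), antisymmetrising in the pair $(i,j)$, and discarding the vanishing constant pieces leaves $\nabla_i R_{jk}-\nabla_j R_{ik}=\tfrac{\kappa^2}{F_R(R)}\left(\nabla_i T_{jk}-\nabla_j T_{ik}\right)$. The left-hand side vanishes by (\ref{v2}), and since $F_R(R)\neq0$ for any physically admissible model I conclude $\nabla_i T_{jk}-\nabla_j T_{ik}=0$, which is precisely (\ref{codazzi}), so $T_{ij}$ is of Codazzi type.

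I expect the main obstacle to be the bookkeeping in the first step: one must be careful that the divergence is taken over the correct slot of $P_{hijk}$ and that the Riemann symmetries (in particular the pair symmetry $R_{hijk}=R_{jkhi}$ and the antisymmetry in the first pair) are used consistently, so that the contracted Bianchi identity applies cleanly and the relation matches (\ref{v1}) with the right signs. A misplaced index or an overlooked sign at this stage would corrupt both the Codazzi conclusion (\ref{v2}) and the constancy of $R$, and thus the entire argument; once those two facts are secured, the passage to the stress-energy tensor is a one-line differentiation of (\ref{fr}).
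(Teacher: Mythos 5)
Your proposal is correct and takes essentially the same route as the paper: take the divergence of (\ref{proj}) with $n=4$ to recover (\ref{v1}), use the contracted Bianchi identity to get the Codazzi--Ricci condition (\ref{v2}) and hence $\nabla_iR=0$ as in (\ref{r}), then covariantly differentiate the reduced field equation (\ref{fr}) and antisymmetrise to obtain (\ref{codazzi}). Your explicit remarks—that the Einstein-space relation (\ref{rij}) is \emph{not} available under mere harmonicity, and that one needs $F_R(R)\neq 0$ to divide it out—only spell out what the paper leaves implicit.
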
 

Now, covariant derivative of (\ref{pf}) gives
\begin{align}
\nabla_kT_{ij}
=&\nabla_kp g_{ij}+(\nabla_k\rho+\nabla_kp)u_iu_j+(\rho+p)\{\nabla_ku_iu_j+u_i\nabla_ku_j\} \notag\\
=&\nabla_kp h_{ij}+\nabla_k\rho u_iu_j+(\rho+p)\{\nabla_ku_iu_j+u_i\nabla_ku_j\} ,\notag
\end{align}
which due to (\ref{codazzi}) produces
\begin{align}\label{eqn:2}
\nabla_kp h_{ij}+&\nabla_k\rho u_iu_j
    +(\rho+p)\{\nabla_ku_iu_j+u_i\nabla_ku_j\}  \notag		\\	
=&\nabla_jp h_{ik}+\nabla_j\rho u_iu_k
    +(\rho+p)\{\nabla_ju_iu_k+u_i\nabla_ju_k\}.
\end{align}
Transvecting (\ref{eqn:2}) with $u^i$ gives
\begin{align}\label{eqn:10}
\nabla_k\rho u_j+(\rho+p)\nabla_ku_j=&\nabla_j\rho u_k+(\rho+p)\nabla_ju_k.
\end{align}
Substituting this back into (\ref{eqn:2}) gives
\begin{align}\label{eqn:20}
\nabla_kp h_{ij}+(\rho+p)\nabla_ku_iu_j=&\nabla_jp h_{ik}+(\rho+p)\nabla_ju_iu_k.
\end{align}
Transvecting (\ref{eqn:20}) with $h^{ij}$ and  $h^{kl}$ yields $2h^{lk}\nabla_kp=0$, implying 
\begin{align}\label{eqn:30}
\nabla_kp=-\dot pu_k;\qquad (\dot p=u^i\nabla_ip).
\end{align}
Transvecting (\ref{eqn:20}) with $u^k$, with the help of (\ref{eqn:30}), we obtain
\begin{align}\label{eqn:40}
(\rho+p)\nabla_ju_i=-\dot ph_{ji}-(\rho+p)u_j\dot u_i; \qquad (\dot u_i=u^k\nabla_ku_i).
\end{align}
By using (\ref{eqn:10}) and (\ref{eqn:40}), we obtain 
\begin{align*}
\nabla_k\rho u_j-(\rho+p)u_k\dot u_j=&\nabla_j\rho u_k-(\rho+p)u_j\dot u_k.
\end{align*}
Transvecting this equation with $u^j$ and $h^{lk}$ gives
\begin{align}\label{eqn:50}
h^{lk}\nabla_k\rho=-(\rho+p)\dot u_kh^{lk}=-(\rho+p)\dot u^l.
\end{align}
The trace of (\ref{fr})
\begin{align}\label{eqn:60}
F_R(R)R=2F(R)+\kappa^2(3p-\rho),
\end{align}
from (\ref{eqn:30}) and (\ref{eqn:50}) gives 
\[
\kappa^2(\rho+p)\dot u^l=-\kappa^2h^{lk}\nabla_k\rho =h^{lk}\nabla_k(F_R(R)R-2F(R)-3\kappa^2 p)=0.
\]
If $\rho+p=0$, then NEC satisfied. Furthermore,
NEC describes the null geodesic congruences, and it is convergent in sufficiently small neighborhoods of every point of spacetime. The physical interpretation of NEC is that particles following null geodesics will observe that gravity tends locally to be attractive (or at least not repulsive) when acting on nearby particles also following null geodesics.

If $\rho+p\neq0$, then it follows from (\ref{eqn:40}) and (\ref{eqn:60}) that 
 \begin{align}\label{eqn:70}
\nabla_ju_i=-\frac{\dot p}{\rho+p}h_{ji}
=\frac{-\kappa^2\dot \rho}{F_R(R)-2F(R)+4\kappa^2\rho}h_{ji}.
\end{align}
Since $h^{lk}\nabla_k\rho=0$, the spacetime is a GRW-spacetime with a scale factor $a(t)$.
Since every GRW-spacetime had the property
\[
\nabla_ju_i=\frac{\dot a}{a}h_{ji}.
\]
comparing with (\ref{eqn:70}) gives
\[
\frac{4\dot a}{a}=\frac{-4\kappa^2\dot \rho}{F_R(R)-2F(R)+4\kappa^2\rho},
\]
which means that 
\[
a=\left(\frac{A}{F_R(R)-2F(R)+4\kappa^2\rho}\right)^{1/4},
\]
where $A$ is a constant.
Hence we get the following results.
\begin{theorem}\label{thm3}
A projectively harmonic perfect fluid spacetime solution of $F(R)$-gravity theories is either a GRW-spacetime
with scale factor 
\[
a=\left(\frac{A}{F_R(R)-2F(R)+4\kappa^2\rho}\right)^{1/4},
\]
or 
$p+\rho=0$.
We have discussed previously that $p+\rho$ is represented as NEC and describes the null energy congruence of the spacetime. Moreover, it describes the nature of gravity in the neighbourhoods of each spacetime point. Primarily, it describes the attractive nature of gravity in the locality of each point.

\end{theorem}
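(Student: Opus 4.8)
The plan is to exploit the Codazzi property of the stress-energy tensor established in Theorem~\ref{thm2}, together with the perfect-fluid decomposition~(\ref{pf}), to pin down the kinematics of the velocity field $u^i$. First I would expand $\nabla_k T_{ij}$ using~(\ref{pf}) and impose the Codazzi condition~(\ref{codazzi}), arriving at the master relation~(\ref{eqn:2}). From there the strategy is purely algebraic: I would transvect~(\ref{eqn:2}) with the orthogonal projector $h^{ij}$ and with $u^i$ to separate the independent scalar and vector pieces buried inside it.

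The key intermediate facts I would extract are: (i)~the pressure gradient is purely timelike, $\nabla_k p = -\dot p\,u_k$, obtained by contracting~(\ref{eqn:20}) with $h^{ij}$ and $h^{kl}$; (ii)~the velocity gradient~(\ref{eqn:40}), obtained by contracting~(\ref{eqn:20}) with $u^k$; and (iii)~the spatial energy-density gradient~(\ref{eqn:50}), expressed through the acceleration $\dot u^l$. These three relations exhaust the geometric content of the Codazzi condition for a perfect fluid.

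Next I would bring in the trace of the field equation~(\ref{eqn:60}), namely $F_R(R)R = 2F(R)+\kappa^2(3p-\rho)$, and use the constancy of $R$ inherited from the projectively-harmonic hypothesis via~(\ref{r}). Combining~(\ref{eqn:30}) and~(\ref{eqn:50}) with~(\ref{eqn:60}) forces the spatial gradient of the bracketed combination to vanish, yielding $\kappa^2(\rho+p)\dot u^l = 0$. This identity is the crux of the argument and the source of the stated dichotomy.

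The decisive and most delicate step is precisely this collapse: one must verify that $h^{lk}\nabla_k\!\left(F_R(R)R-2F(R)-3\kappa^2 p\right)=0$, which relies on $R$ being constant and on $\nabla_k p$ being purely timelike. Once $\kappa^2(\rho+p)\dot u^l=0$ is in hand, the two cases split cleanly. If $\rho+p=0$ we land in the NEC-boundary branch. If $\rho+p\neq 0$, then $\dot u^l=0$, so the flow is acceleration-free; substituting this back into~(\ref{eqn:40}) produces the purely spatial velocity gradient~(\ref{eqn:70}), and since also $h^{lk}\nabla_k\rho=0$ the congruence is shear-free and irrotational, which identifies the spacetime as a GRW-spacetime. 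Matching~(\ref{eqn:70}) against the defining GRW relation $\nabla_j u_i = (\dot a/a)\,h_{ji}$ then gives a separable first-order ODE for $a$, whose integration yields the advertised scale factor.
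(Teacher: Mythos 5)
Your proposal is correct and follows essentially the same route as the paper's proof: the same transvections of the Codazzi relation~(\ref{eqn:2}) with $u^i$, $h^{ij}$, $h^{kl}$ and $u^k$ to obtain (\ref{eqn:30}), (\ref{eqn:40}) and (\ref{eqn:50}), the same use of the trace equation~(\ref{eqn:60}) together with the constancy of $R$ to force $\kappa^2(\rho+p)\dot u^l=0$, and the same dichotomy leading either to $\rho+p=0$ or to the torse-forming relation~(\ref{eqn:70}), the GRW identification, and the separable ODE for $a$. In fact, your explicit remark that $\rho+p\neq 0$ implies $\dot u^l=0$ before substituting into~(\ref{eqn:40}) makes a step that the paper leaves implicit.
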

The profile of scale factor for two models are discussed below.
For model-1, the scale factor reads
\begin{equation*}
a=\sqrt[4]{-\frac{A e^{R/\alpha }}{(R-1) \left(e^{R/\alpha }-1\right)}}, \,\ R\neq 1,
\end{equation*}
and, for model-2, the scale factor reads
\begin{equation*}
a=\sqrt[4]{-\frac{A R}{(R-1)^2}}, \,\ R\neq 1.
\end{equation*}
From Fig. 7 and 8, one can see that the value of scale factor decreases as the Ricci scalar increases, which indicates that for higher curvature the scale factor of our universe is small and vice-verse. Also, it suggests that curvature reduces as time goes on  and scale factor increases.
\begin{figure}[H]
\begin{center}
\includegraphics[scale=0.4]{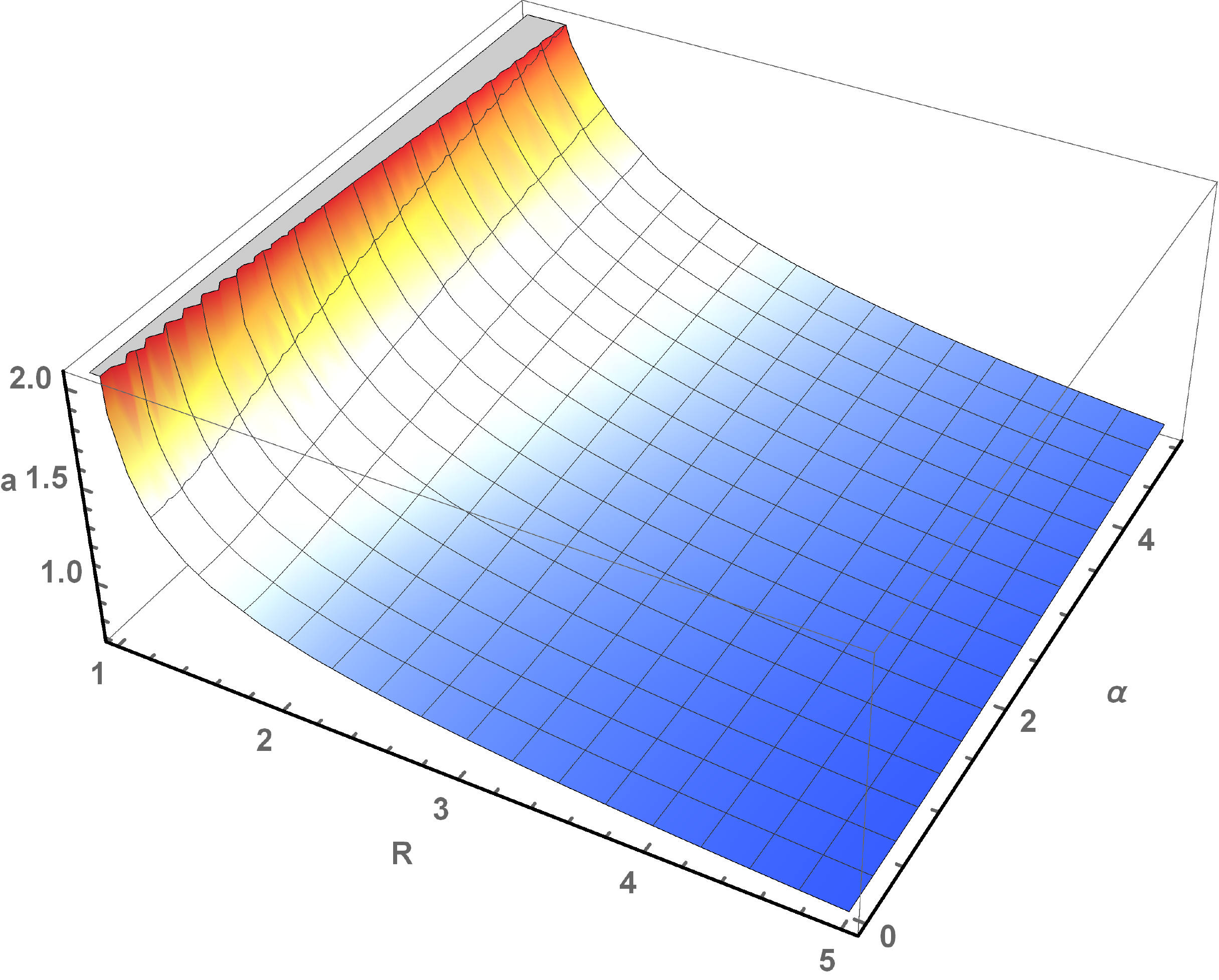}

\caption{Evolution of scale factor with respect to $R$ and $\alpha$ with $A<0$ for model-1.}
\end{center}
\end{figure}
\begin{figure}[H]
\begin{center}
\includegraphics[scale=0.4]{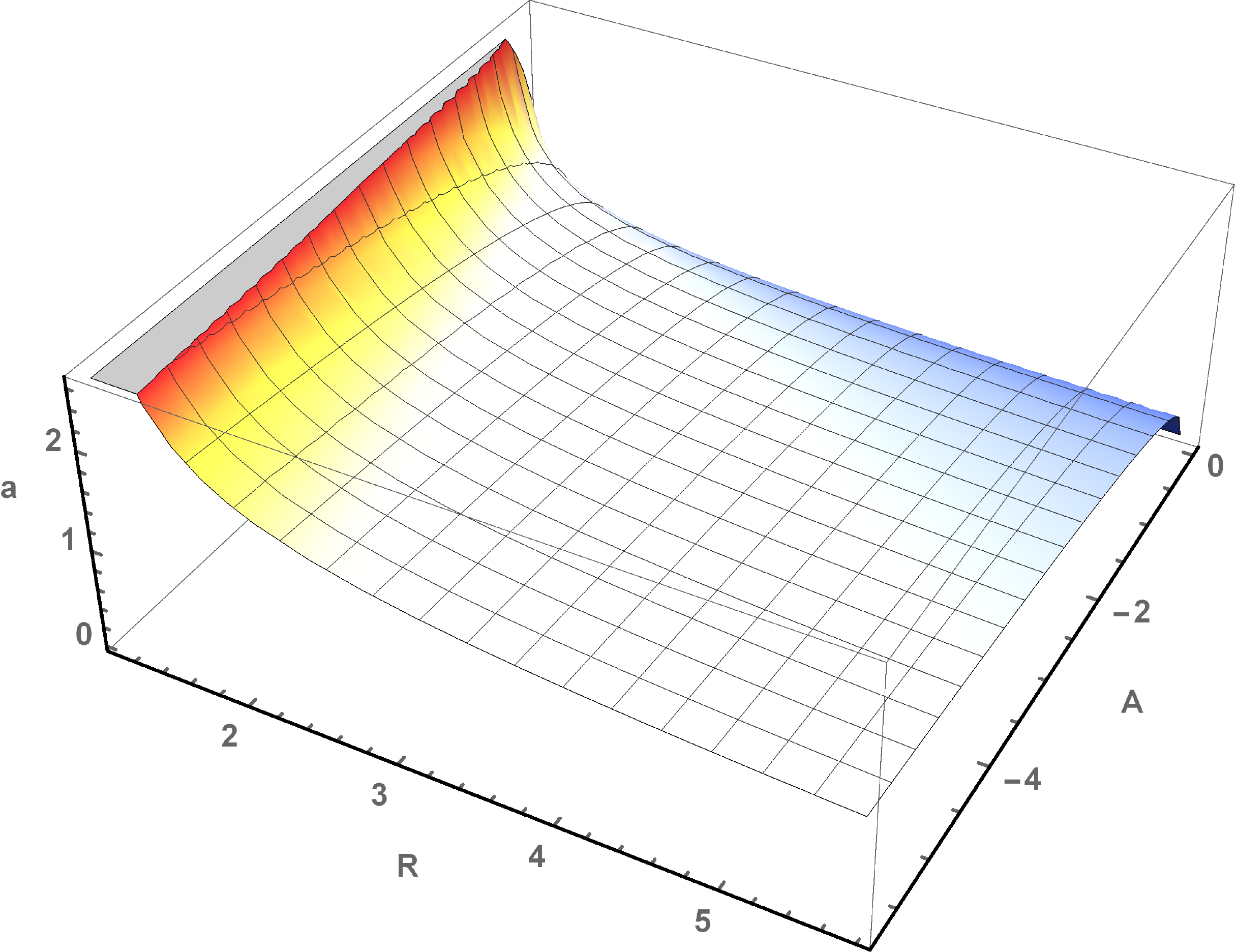}
\caption{Evolution of scale factor with respect to $R$ and  $A<0$ for model-2.}
\end{center}
\end{figure}

\begin{remark}
In \cite{beesham}, the authors rightly showed that conharmonic flatness in a spacetime constrains the standard theory of gravity to a trivial vacuum case with $R_{ij}=0$ and $R=0$ case. Even in modified gravity theories the situation does not improve much, it ends up resulting into vanishing effective pressure and energy density.

We further can add to the findings of the article \cite{beesham} that instead of conharmonic flatness if one imposes divergence free conharmonic curvature into the geometry, then the Ricci scalar $R$ of the spacetime is constant and the Ricci curvature tensor is of Codazzi type. Therefore, same conclusions as in Theorem \ref{thm2} and Theorem \ref{thm3} follow from similar arguments.   
\end{remark}
\begin{remark}
All the results of this section are also true for the standard theory of gravity governed by Einstein's field equations (\ref{efe}).
\end{remark}

\section{\textbf{Discussion}}
This manuscript has focused on studying the projectively flat spacetime in the framework of $F(R)$ gravity. We have analyzed our study both analytically and graphically. We have used the analytical method to construct our formulation, and graphical analysis has been used to check the stability of two cosmological toy models for a better understanding, such as $F(R)=R-\alpha  \left(1-e^{-\frac{R}{\alpha }}\right)$ and  $F(R)=R-\log (\beta  R)$.

Moreover, we have examined the stability analysis of the cosmological models using energy conditions. For model-1, we have presented the profiles of energy conditions in Figures 1,2 and 3. It has been shown that the energy density is positive throughout the evolution of parameters $R>1$ and $\alpha>0$. Furthermore, DEC, NEC and WEC have been satisfied, whereas SEC violated. All of the above profiles of energy conditions are in agreement with the accelerated expansion of the universe. In addition, the equation of state parameter $\omega$ takes its value as $-1$, which also indicates the dark energy era. Nevertheless, these results are also compatible with the $\Lambda$CDM model. Similarly, for the second model, we have depicted all the energy conditions in Figures 4, 5, and 6. The results we have obtained for model-2 are aligned with the results of model-1.

Finally, we have studied the projectively harmonic spacetime in the framework of $F(R)$ gravity. We have formulated the scale factor in terms of $F(R)$ and alternately found $\rho+p=0$. Then we have presented the evolution of the scale factor for two models concerning Ricci scalar $R$ and model parameter in Fig. 7 and 8, respectively. From those figures, we have observed that for large values of $R$, scale factor is small, which suggests that the scale factor of the universe is small in the early phase due to the high curvature, and later it increases. Alternately, NEC has been satisfied, that is, $\rho+p=0$. The advantage of this setup is that it does not allow hypothetical astronomical objects like wormholes (where NEC should be violated) and allow the Big-Bang singularity.

\section*{Acknowledgments}

A.D. and L.T.H. are supported by the FRGS research grant (FRGS/1/2019/STG06/UM/02/6). S.M. acknowledges Department of Science \& Technology (DST), Govt. of India, New Delhi, for awarding Junior Research Fellowship (File No. DST/INSPIRE Fellowship/2018/IF180676). We are very much grateful to the honorable referee and to the editor for the illuminating suggestions that have significantly improved our work in terms of research quality, and presentation.


\begin{thebibliography}{99}

\bibitem{weyl1} H. Weyl, Reine Infinitesimalgeometrie, \textit{Math. Z.}, \textbf{2} (1918) 384. Reprinted in Gesammelte Abhandlungen ed. K.Chandrasekharan Vol.2 1--28.

\bibitem{weyl} H. Weyl, Zur Infinitesimalgeometrie: Einordnung der projektiven und der konformen Auffassung, \textit{G\"ottingen Nachr.} (1921) 99.  Reprinted in esammelte Abhandlungen ed. K.Chandrasekharan Vol.2 195-207

\bibitem{projconf} C. L\"ubbe, A note on the coincidence of the projective and conformal Weyl tensors, arXiv:1301.5659. 

\bibitem{projsym} G. Shabbir, Weyl Projective Curvature Symmetry in FRW $k=0$ Model, arXiv:0705.1882.

\bibitem{weyl-survey} Hans-Jurgen Schmidt, Fourth order gravity: equations, history, and applications to cosmology, \textit{Int. J. Geom. Methods Mod. Phys.}, \textbf{04}, (2007) 209. 

\bibitem{star}A. A. Starobinsky, A new type of isotropic cosmological models without singularity, \textit{Phys. Lett. B}, \textbf{91} (1980) 99.

\bibitem{carroll} S. M. Carroll, V. Duvvuri, M. Trodden and M. S. Turner, Is cosmic speed-up due to new gravitational physics?, \textit{Phys. Rev. D}, \textbf{70} (2004) 043528.


\bibitem{f(R)-survey} A. De Felice, S. Tsujikawa, f(R) Theories. \textit{Living Rev. Relativ.} \textbf{13} (2010) 3.




\bibitem{Raychaudhuri/1995} A.  Raychaudhuri, Relativistic Cosmology \textit{Phys.  Rev.}, \textbf{98} (1955) 1123.

\bibitem{R2}J. Ehlers, A. K. Raychaudhuri and his equations \textit{Int. J. Mod. Phys. D}, \textbf{15} (2006) 1573.

\bibitem{R3}S. Nojiri, S. D. Odintsov, Introduction to Modified Gravity and Gravitational Alternative for Dark Energy \textit{Int. J. Geom. Methods Mod. Phys.}, \textbf{4} (2007) 115 .

\bibitem{Moraes/2017}P. H. R. S. Moraes, P. K. Sahoo, The simples non-minimal matter-geometry coupling in the $f(R,T)$ cosmology, \textit{Eur. Phys. J. C.}, \textbf{77}, (2017) 480.

\bibitem{M2} R. M. Wald, General relativity (University of Chicago Press, Chicago, 1984.

\bibitem{capozziello} S. Capozzeillo, S. Nojiri, S. D. Odinstov, The role of energy conditions in $f(R)$ cosmology, \textit{Phys. Lett. B}, \textbf{781} (2018) 99.

\bibitem{sm1} S. Mandal, P.K. Sahoo, J.R.L. Santos, Energy Conditions in f(Q) gravity, \textit{Phys. Rev. D}, \textbf{102} (2020) 024057.

\bibitem{sa2} S. Arora, J. R. L. Santos, P.K. Sahoo, \textit{Phys. Dark Univ.}, \textbf{31} (2021) 100790.

\bibitem{sm2} S. Mandal, P. K. Sahoo, A Complete Cosmological Scenario in Teleparallel Gravity, 
Eur. Phys. J. Plus \textbf{135}(2020) 706.


\bibitem{sm3} S. Mandal et al., Accelerating Universe in Hybrid and Logarithmic Teleparallel Gravity, \textit{Phys. Dark Univ.}, \textbf{28} (2020) 100551.


\bibitem{sa1} A. De et al., Energy conditions for a (WRS)4 spacetime in $F(R)$ gravity. Eur. Phys. J. Plus, 136 (2021) 218.

\bibitem{sm4} S. Mandal et al., Almost-Pseudo-Ricci Symmetric FRW Universe with a Dynamic Cosmological Term and Equation of State, \textit{Universe}, \textbf{7} (2021) 205.

\bibitem{planck/2018}Planck 2018 results. VI. Cosmological parameters, arXiv:1807.06209.


\bibitem{beesham} A. Beesham and F. Makhanya, How Extra Symmetries Affect Solutions in General Relativity, \textit{Universe}, \textbf{6} (2020) 170.

\end{thebibliography}
\end{document}